\documentclass[pra,onecolumn,notitlepage,nofootinbib,floatfix]{revtex4-2}

\usepackage{fullpage}
\usepackage{caption}
\captionsetup[figure]{font=small}

\usepackage[T1]{fontenc}
\usepackage[english]{babel}
\usepackage[utf8]{inputenc}
\usepackage[dvipsnames]{xcolor}
\usepackage[colorlinks=true,urlcolor=blue,citecolor=blue,linkcolor=blue,anchorcolor=blue]{hyperref}
\usepackage{graphics, graphicx, url, color, physics, cancel, multirow, bbm, soul, mathtools, mathrsfs, amsfonts, amssymb, amsmath, amsthm, dsfont, tabularx, bm, verbatim, theoremref, tikz, caption, subcaption, xcolor}
\usepackage[normalem]{ulem}
\usepackage[ruled,vlined]{algorithm2e}

\newtheorem{theorem}{Theorem}

\newtheorem{lemma}{Lemma}

\def\GL{{\textrm{GL}}}
\def\P{{\mathbf{P}}}
\def\Pl{{\mathcal{P}_\lambda}}
\def\Q{{\mathbf{Q}}}
\def\Qld{{\mathcal{Q}^d_\lambda}}
\def\Ql{{\mathcal{Q}}}
\def\Qlt{{\mathcal{Q}^2_\lambda}}
\def\Ud{{\mathcal{U}_d}}
\def\Sn{{\mathcal{S}_n}}

\def\CGl{{U^\textrm{CG}_\lambda}}
\def\SCG{{U^\textrm{S-CG}}}
\def\Sch{{U^\text{Sch}}}
\def\Cdn{{(\mathbb{C}^d)^{\otimes n}}}

\binoppenalty=9999
\relpenalty=9999

\hyphenpenalty=9999
\exhyphenpenalty=9999

\begin{document}

\title{Weak Schur sampling with logarithmic quantum memory}

\author{Enrique Cervero-Mart\'{i}n}
\email{enrique.cervero@u.nus.edu}
\affiliation{Centre for Quantum Technologies, National University of Singapore}

\author{Laura Mančinska}
\email{mancinska@math.ku.dk}
\affiliation{Centre for the Mathematics of Quantum Theory, University of Copenhagen}

\date{\today}

\begin{abstract}
The quantum Schur transform maps the computational basis of a system of $n$ qudits onto a \textit{Schur basis}, which spans the minimal invariant subspaces of the representations of the unitary and the symmetric groups acting on the state space of $n$ $d$-level systems. 
We introduce a new algorithm for the task of weak Schur sampling. Our algorithm efficiently determines both the Young label which indexes the irreducible representations and the multiplicity label of the symmetric group. 
There are two major advantages of our algorithm for weak Schur sampling  when compared to existing approaches which proceed via quantum Schur transform algorithm or Generalized Phase Estimation algorithm. First, our algorihtm is suitable for streaming applications and second it is exponentially more efficient in its memory usage.  
We show that an instance of our weak Schur sampling algorithm on $n$ qubits to accuracy $\epsilon$ requires only $O(\log_2n)$ qubits of memory and $O(n^3\log_2(\frac{n}{\epsilon}))$ gates from the Clifford+T set.
Further, we show that our weak Schur sampling algorithm on $n$ qudits decomposes into $O\big(dn^{2d}\log_2^p\big(\frac{n^{2d}}{\epsilon}\big)\big)$ gates from an arbitrary fault-tolerant qudit universal set, for $p\approx 4$, and requires a memory of $O(\log_dn)$ qudits to implement.
\end{abstract}

\maketitle

\section{Introduction}

Symmetry plays a very important role in quantum mechanics which quantum physicists and quantum computer scientists alike often strive to exploit in order to understand and manipulate quantum systems.
At the core of these symmetries are the groups of unitary matrices and of permutations.
The actions of the unitary group and the symmetric group on the state space of $n$ $d$-dimensional systems, or {qudits}, can be understood by {Schur-Weyl duality}.

Schur-Weyl duality allows us to to decompose the $n$-qudit state space in terms of the irreducible representations of the unitary group and the symmetric groups. 
Explicitly, Schur-Weyl duality~\cite{GoWa98} states that
\begin{align}\label{SWDualityIntro}
    \Cdn \overset{\Sn\times\Ud}{\cong} \bigoplus_{\lambda} \Pl\otimes\Qld,
\end{align}
where $\Pl$ and $\Qld$ are the irreducible representations of the symmetric and unitary groups, respectively, and the summation index $\lambda$ is known as the {Young label} and corresponds to partitions of $n$.
Any basis that spans $\Cdn$ and respects the decomposition in the right hand side of Eq.~\eqref{SWDualityIntro} is known as a \emph{Schur basis}.
Henceforth, we will refer to the Schur basis introduced by Bacon, Chuang and Harrow in \cite{BCH05, HarrowTh05} as \emph{the} Schur basis.
An element in the Schur basis is specified by a triple of labels $(\lambda, p_\lambda, q_\lambda)$ which respectively index the Young label $\lambda$ and the elements $p_\lambda$ and $q_\lambda$ within the bases of the irreducible representations $\Pl$ and $\Qld$.
A \emph{quantum Schur transform} maps the computational basis of $\Cdn$ to a Schur basis.
Likewise, we will use \emph{the} quantum Schur transform to denote the map introduced by the same authors in \cite{BCH05, HarrowTh05}.

Quantum Schur transforms have found a lot of applications in quantum computation and information protocols that seek to exploit the symmetries of quantum mechanical systems. 
They have been used in spectrum estimation \cite{KeyWer01, ChMi06, DoWr15} and quantum state tomography \cite{DoWr16, Haetal17}, as the Young label of $n$ i.i.d. copies of an arbitrary qudit provides an asymptotically good estimate to its spectrum. 
Schur transforms have also found applications in quantum data compression in \cite{HaMa03, Hayashi16}, where the Young label of $n$ identically prepared qubits is used in the encoding function. 
In universal distortion-free entanglement concentration \cite{MaHa07, BlCrGo14}, local applications of a quantum Schur transform followed by measurements of the Young label are used to distill maximal entanglement from partially entangled states. 
In addition, Schur transform and Schur-Weyl duality are also used in protocols for encoding and decoding into decoherence-free subspaces \cite{ZaRa97, KnLaVi00, BaconThesis03, KeBaLiWha01}, quantum superreplication \cite{ChiYa15, ChiYa16} and in de Finetti theorems \cite{KoeMit09, Gross21}.
Most recently, Schur transform have been used as a subroutine in geometric quantum machine learning \cite{Ragoneetal22, Nguyenetal22, Schatzkietal22} and quantum majority voting \cite{Buhrmanetal22}.

Most of the protocols outlined above only require knowledge of the Young label $\lambda$ of the Schur basis, and not the full triple $(\lambda, p_\lambda, q_\lambda)$ identifying a Schur basis element.
The task of measuring the Young label of a system of $n$ qudits is known as weak Schur sampling. 
Conversely, strong Schur sampling denotes the task of measuring the entire Schur basis labels $(\lambda, p_\lambda, q_\lambda)$.

One way to implement Schur sampling is to first apply the quantum Schur transform and then apply an appropriate coarse-graining of a standard basis measurement.

In the seminal work \cite{BCH05, HarrowTh05}, Bacon, Chuang and Harrow construct efficient quantum circuits to implement the Clebsch-Gordan transforms which they use to iteratively build the quantum Schur transform.
Their algorithm implements the quantum Schur transform to accuracy $\epsilon$ in time $O(\textrm{poly}(n,d,\log(1/\epsilon)))$ and uses $O(n)$ qubits of memory.
More recently, Kirby and Strauch proposed in \cite{KiSt18} a fully explicit implementation of the quantum Schur transform which is also based on the Clebsch-Gordan transforms.
Their implementation runs in $O(n^4\log(n/\epsilon))$ for qubits and $O(d^{1+p}n^{3d}\log^p(dn/\epsilon))$ for $p\approx 4$ for qudits, and requires $n+\log_d n$ qudits of memory.
Concurrently, in \cite{Krovi19} Krovi devised an implementation of another quantum Schur transform for qubits using quantum Fourier transforms as the building blocks. 
This implementation runs in $O(\textrm{poly}(n, \log(1/\epsilon)))$ and also requires $O(n)$ qubits of memory.

Another known approach for performing weak Schur sampling is via the generalized phase estimation (GPE) algorithm by Harrow \cite{HarrowTh05}. As compared to the approach via the quantum Schur transform, the approach via GPE achieves an improved runtime of $\text{poly}(n)+O(n\log d)$ but still requires $O(n)$ qubits of memory.

In this article, we present an algorithm for weak Schur sampling which, like the GPE algorithm, directly outputs the Young label of a system of $n$ qudits and may be modified to additionally output the register of the Schur basis which indexes the irreducible representations of the symmetric group.
Our implementation performs weak Schur sampling to accuracy $\epsilon$ on $n$ qubits using $O(n^3\log_2(\frac{n}{\epsilon}))$ gates from the Clifford+T set.
We further show that for an arbitrary fault tolerant qudit universal set, our weak Schur sampling algorithm for $n$ qudits decomposes into $O(n^{2d}\log_2^p(\frac{n}{\epsilon}))$ gates, for $p\approx 4$.
Crucially our algorithm requires only $\log_d n$ qudits to be implemented, which constitutes an exponential memory saving over the two existing approaches to performing weak Schur sampling (see Table~\ref{table}). Moreover, our weak Schur sampling protocol is also \emph{streaming} (or \emph{online}), that is, it may be performed adaptively as qudits progressively become available to the quantum device, with the possibility of being terminated at any iteration step while still outputting the Young label of the ensemble at termination. 

\begin{table}[h!]
\caption{Requirements for various algorithms for weak Schur sampling}
\label{table}
\centering
 \begin{tabular}{||c c c c||} 
 \hline
  & GPE \cite{HarrowTh05} &  Quantum Schur transform \cite{KiSt18}  &  \phantom{m}  Our algorithm\\ [0.5ex] 
 \hline\hline
 Gate count & $O(\text{poly}(n))$ & $O(n^4\log(\frac{n}{\epsilon}))$ & $O(n^3\log(\frac{n}{\epsilon}))$\\ 
 Memory & $O(n)$ & $O(n)$ & $O(\log(n))$ \\ [1ex] 
 \hline
 \end{tabular}
\end{table}

We remark that the quantum algorithm for entanglement concentration presented in \cite{BlCrGo14} shares some of the key ideas which facilitate our log-space, streaming algorithm for weak Schur sampling. The focus of our work is weak Schur sampling. In contrast to our work, the authors of \cite{BlCrGo14} focus on entanglement concentration and do not present any explicit algorithms for weak Schur sampling or provide analysis of resource requirements or correctness for this task.

\textbf{Technical overview: }In \cite{HarrowTh05, BCH05, KiSt18}, the Schur basis for $n$ qudits is constructed sequentially by adding one qudit at every iteration and subsequently `raising' the Schur basis from $k$ to $k+1$ qudits, as follows:
The $k$-th iteration initializes with $k$ qudits in the Schur basis and one qudit in the standard basis. 
That is, the $k+1$ qudits lie in space $\Big(\bigoplus_\lambda \Pl\otimes\Qld\Big)\otimes\mathbb{C}^d$, where the summation runs over partitions $\lambda$ of $n$.
The $k$-th iteration then proceeds by applying a Clebsch-Gordan transform on each of the $\dim\Pl$ copies of space $\Qld\otimes\mathbb{C}^d$.
In essence, the Clebsch-Gordan transforms `raise' the Schur basis of the irrep $\Qld\subset(\mathbb{C}^d)^{\otimes k}$ tensor an additional qubit in $\mathbb{C}^d$ to the Schur basis of the collection of irreps $\bigoplus_{\lambda'}\mathcal{Q}_{\lambda'}^d\subset(\mathbb{C}^d)^{\otimes k+1}$, where $\lambda'$ are partitions of $k+1$ formed from $\lambda$.
Overall, the $k$-th iteration results in the Schur basis on $k+1$ qudits.
In our algorithm for weak Schur sampling, we rely on the observation that performing a measurement collapsing 
$\bigoplus_{\lambda'}\mathcal{Q}_{\lambda'}^d \rightarrow \mathcal{Q}_{\lambda'}^d $
for some partition $\lambda'$ of $k+1$, reduces the size of the overall space that needs to be stored and saves on the number of Clebsch-Gordan transforms that need to be applied in each iteration. 
The main technical difficulty is thus showing that the probability of weak Schur sampling a specific Young label $\lambda$ partitioning $n$, after an application of the Schur transform followed by an appropriate measurement equals the probability of obtaining the same label $\lambda$ with our sequential algorithm, which during every iteration performs a single Clebsch-Gordan transform followed by a measurement.

This article is structured as follows: In Section \ref{Sec:Rev} we provide necessary background on Schur-Weyl duality, the quantum Schur transform and Schur sampling, and in Section \ref{Sec:wSt} we describe our algorithm, prove its correctness and give a fully explicit implementation for qudits.

\section{Review of the quantum Schur transform and Schur sampling}
\label{Sec:Rev}

We assume familiarity with representation theory including irreducible representations, isotypic decompositions and the representations of the symmetric and unitary groups, and with the framework of quantum information including quantum states, completely-positive trace-preserving (CPTP) maps and measurements. 
We refer to \cite{GoWa98} for a comprehensive introduction to representation theory, and \cite[Chapter\ 5]{HarrowTh05} for the representation theory specific to the quantum Schur transform.
We further refer to \cite{NiCh10} for an introduction to quantum information and computation. 

\subsection{Schur-Weyl Duality}
We consider the state space of $n$ qudits, i.e. the Hilbert space $\Cdn$. This space is spanned by the orthonormal basis $\{\ket{i_1}\otimes\cdots\otimes \ket{i_n}\}$ where each $i_k\in\{0,\ldots, d-1\}$.
We endow $\Cdn$ with a representation $\P$, of the symmetric group $\Sn$ of permutations of $n$ elements, and a representation $\Q$, of the unitary group $\Ud$ of unitary endomorphisms of $\mathbb{C}^d$.
The representation $\P: \Sn \rightarrow \textrm{GL}(\Cdn)$ is given by
\begin{align}\label{SymRep}
    \P(\sigma) \ket{i_1}\otimes \cdots \otimes\ket{i_n} = \ket{i_{\sigma^{-1}(1)}}\otimes \cdots \otimes \ket{i_{\sigma^{-1}(n)}},
\end{align}
while the representation $\Q:\Ud\rightarrow \GL(\Cdn)$ is given by
\begin{align}\label{UnRep}
    \Q(U) \ket{i_1}\otimes \cdots \otimes\ket{i_n} = U\ket{i_1}\otimes \cdots \otimes U\ket{i_n}.
\end{align}

The representations $\P$ and $\Q$ commute and moreover generate algebras $\mathscr{Q}:= \textrm{span}\{\Q(U) : U\in \Ud\}$ and $\mathscr{P}:= \textrm{span}\{\P(\sigma) : \sigma \in \Sn\}$ that are each others commutants\footnote{
The commutant of an algebra $\mathscr{A}$ in $\Cdn$ is defined as $\textrm{Comm}(\mathscr{A}):=\{M \in \textrm{End}(\Cdn) : MA = AM, \: \forall A\in\mathscr{A}\}$.
}.
Schur-Weyl duality states that $\textrm{Comm}(\mathscr{Q})=\mathscr{P}$ and $\textrm{Comm}\mathscr{P}=\mathscr{Q}$.
As a consequence we arrive at the well known formulation of Schur-Weyl duality:
\begin{align}\label{SWDuality}
    \Cdn \overset{\Sn\times\Ud}{\cong} \bigoplus_{\lambda\vdash n} \Pl\otimes\Qld,
\end{align}
where $\Pl$ and $\Qld$ are the respective irreducible representations (irreps) of the symmetric and unitary groups, and the summation index $\lambda$ runs through partitions of $n$ with exactly $d$ elements\footnote{A partition of $n$ is a tuple of non-increasing, non-negative integers that add up to $n$.} (written $\lambda\vdash n$) and are known as the \emph{Young labels}.
For each partition $\lambda$, the spaces $\Pl\otimes\Qld$ are called $\lambda$-isotypic subspaces.

Expressed as in Eq. \eqref{SWDuality}, Schur-Weyl duality states that the multiplicity spaces of the irreps of the symmetric group are isomorphic to the irreps of unitary group, and vice-versa.
That is, each $\lambda$-isotypic subspace $\Pl\otimes\Qld$ comprises of $\dim \Pl$ copies of irrep $\Qld$ or equivalently, $\dim \Qld$ copies of irrep $\Pl$, where
\begin{align}
    \dim\Pl=\frac{n!}{\prod_i h_\lambda(i)},
\end{align} 
is the hook length formula and
\begin{align}
        \dim\Qld = \frac{\prod_{0\leq i<j\leq d-1}(\lambda_i-\lambda_j+i-j)}{\prod_{m=1}^{d} m!}.
    \end{align}
is Stanley's hook-content formula \cite{Stanley71, BCH05}.
In the qubit case we have $\lambda = (\lambda_0, \lambda_1)$ and the above simplify to
\begin{align}
    \dim\Pl ={\lambda_0 + \lambda_1 \choose \lambda_0}\cdot \frac{\lambda_0 - \lambda_1 + 1}{\lambda_0+1}, \qquad\qquad \dim \Qld = \lambda_0 - \lambda_1 + 1.
\end{align}

\subsection{The quantum Schur transform}

A \emph{quantum Schur transform} $U_\textrm{Sch}(n)$ is an unitary map performing the isomorphism in Eq. \eqref{SWDuality}.
It maps the standard computational basis of $\Cdn$ to a \emph{Schur basis} $\{\ket{(\lambda, p_\lambda, q_\lambda)}\}_{\lambda, p_\lambda, q_\lambda}$, where the index $\lambda$ labels the isotypic subspace $\Pl\otimes\Qld$ and, with a slight abuse of notation\footnote{$\Pl$ and $\Qld$ are linear subspaces corresponding to some irreducible representations of respective dimensions $\dim\Pl, \dim\Qld$, as such we should write $p_\lambda\in[\dim\Pl]$ and $q_\lambda\in[\dim\Qld]$. }, the indices $p_\lambda\in\Pl$ and $q_\lambda\in\Qld$ label the respective bases elements of $\Pl$ and $\Qld$.
A Schur basis element $\ket{(\lambda, p_\lambda, q_\lambda)}$ may be expressed as a linear combination of standard computational basis elements
\begin{align}\label{Schur_basis}
    \ket{(\lambda, p_\lambda, q_\lambda)} = \sum_{i_1,\ldots,i_n=0}^{d-1} \big[U_\textrm{Sch}(n)\big]^{\lambda, p_\lambda, q_\lambda}_{i_1,\ldots,i_n}\ket{i_1,\ldots, i_n}.
\end{align}
In particular, in the computational basis $\Sch(n)$ is precisely the matrix with Schur basis vectors as rows.

There exist several implementations of quantum Schur transforms for qudits in quantum devices \cite{BCH05, HarrowTh05, KiSt18, Krovi19}. We give a brief description of the implementation in \cite{KiSt18} which uses the \emph{Clebsch-Gordan transforms} as building blocks.

Given $\lambda\vdash n$ and $\mu\vdash m$, the Clebsch-Gordan transform is the unitary map 
\begin{align}
    U^\textrm{CG}_{\lambda, \mu}: \mathcal{Q}_\lambda^d \otimes \mathcal{Q}_\mu^d \rightarrow \bigoplus_\nu \mathbb{C}^{n_\nu}\otimes \mathcal{Q}_\nu^d,
\end{align}
where $\nu\vdash (n+m)$, and $n_\nu$ is the multiplicity of irrep $\mathcal{Q}^d_\nu$ which is related to the so called \emph{Littlewood-Richardson coefficient} \cite{GoWa98}.
We are interested in the case when $m=1$ and $\nu$ runs over partitions of $n+1$. In this case
\begin{align}\label{CGTransform}
    \CGl: \mathcal{Q}_\lambda^d \otimes \mathcal{Q}_{(1)}^d \rightarrow \bigoplus_{j=0}^{d-1} \mathcal{Q}^d_{\lambda+\mathbf{e_j}},
\end{align}
where $\lambda+\mathbf{e_j}$ is the partition of $n+1$ obtained by adding a $1$ to the $j$-th element of the tuple $\lambda$ for $j\in\{0,\dots,d-1\}$, whenever this is valid\footnote{For the sake of exposition, we take $\mathcal{Q}^d_{\lambda+\mathbf{e}_j}$ to be empty if $\lambda+\mathbf{e}_j$ is not a valid partition} (e.g. $(1,1)+\mathbf{e_1}=(1,2)$ is not valid a partition of $3$). 
Since $\mathcal{Q}_{(1)}^d=\mathbb{C}^d$, Eq. \eqref{CGTransform} intuitively corresponds to `raising' the Schur basis of a single copy of irrep $\Qld$ upon the addition of an extra qudit, to the Schur basis of $n+1$ qudits in irreps $\bigoplus_{j=0}^{d-1} \mathcal{Q}^d_{\lambda+\mathbf{e_j}}$.

Parting from this observation, we may `raise' the Schur basis on $n$ qudits to that of $n+1$ qudits by applying a Clebsch-Gordan transform $U_\textrm{CG}^\lambda$ to each copy of each irrep $\Qld$ for $\lambda\vdash n$. 
By Eq. \eqref{SWDuality}, there are exactly $\dim \Pl$ copies of $\Qld$ for each $\lambda\vdash n$. 
Hence, defining the \emph{super Clebsch-Gordan transform} on $n$ qubits as
\begin{align}\label{SuperCG}
    \SCG(n) := \bigoplus_{\lambda\vdash n} \mathbb{I}_{\dim \Pl} \otimes \CGl,
\end{align}
it can be shown that (up to a reordering of the isotypic decomposition on each side)
\begin{align}
    \SCG(n): \Big(\bigoplus_{\lambda\vdash n} \Pl\otimes\Qld\Big) \otimes \mathbb{C}^d \rightarrow \bigoplus_{\lambda\vdash n+1} \Pl\otimes\Qld.
\end{align}
Informally, the super Clebsch-Gordan transform `raises' the Schur basis from $n$ qudits to $n+1$ qudits upon the addition of an extra qudit by virtue of applying a Clebsch-Gordan transform to each copy of each irrep $\Qld$ of $n$ qudits.

As such, the Schur transform based on Clebsch-Gordan transforms is built by sequential applications of super Clebsch-Gordan transforms
\begin{align}\label{SchPractical}
    \Sch(n)=\SCG({n-1})\cdot(\SCG({n-2})\otimes\mathbb{I}_d)\cdot...\cdot(\SCG({1})\otimes\mathbb{I}_d^{\otimes n-2}).
\end{align}

Constructing the quantum Schur transform in this manner yields a specific Schur basis known as the \textit{Gelfand-Tsetlin basis}.
This is the Schur basis considered in \cite{BCH05, HarrowTh05, KiSt18} and which we henceforth refer to as \emph{the} Schur basis when there is no ambiguity.
In the qubit case, the Gelfand-Tsetlin basis for the $\dim\Pl$ copies of $\Qlt$ for $\lambda = (\lambda_0,\lambda_1)\vdash n$ coincides with the $\dim\Pl$ bases of the $n$-qubit system with spin number $j=\frac{\lambda_0-\lambda_1}{2}$ and associated $2j+1=\lambda_0-\lambda_1+1=\dim\Qlt$ spin projection numbers $m_j\in\{-j,-j+1,\dots,j\}$.

\subsection{Schur sampling}
The task of measuring the partition $\lambda\vdash n$ in a system of $n$ qudits is known as \textit{weak Schur sampling}. To formalize this task, let us consider the isotypic spaces $\Pl\otimes\Qld$ indexed along with the projections
\begin{align}\label{wSs_std}
    \Pi_\lambda^\text{Std} = \sum_{p_\lambda\in \Pl}\sum_{q_\lambda\in \Qld} \ketbra{(\lambda, p_\lambda, q_\lambda)},
\end{align}
where the superscript \textit{Std} denotes that the projection is in the standard basis and $\ket{(\lambda, p_\lambda, q_\lambda)}$ are as in Eq. \eqref{Schur_basis}.
A quantum algorithm for weak Schur sampling is any procedure which upon all inputs $\rho$ outputs the partition label $\lambda$ with probability $\text{tr}\big[\rho \Pi_\lambda^\text{Std}\big]$.

One algorithm for weak Schur transform is to first apply the quantum circuit for the Schur transform to change to the Schur basis and then perform the projective measurement given by
\begin{align}\label{wSs}
    \Pi_\lambda^\text{Sch} 
    = \bigoplus_{\lambda'\vdash n}\delta_{\lambda\lambda'}\mathbb{I}_{\dim(\mathcal{P}_{\lambda'})}\otimes\mathbb{I}_{\dim(\mathcal{Q}^d_{\lambda'})},
\end{align} 
where the superscript \textit{Sch} denotes the projection is in the Schur basis. In the next section we present an exponentially more efficient algorithm in terms of its memory requirements.

We refer to $\Pi_\lambda^\text{Std}$ and $\Pi_\lambda^\text{Sch}$ as projections onto the `$\lambda$ register' of the Schur basis. 
Further we remark that $\Pi_\lambda^\text{Std}= (\Sch(n))^\dagger\cdot\Pi_\lambda^\text{Sch}\cdot \Sch(n)$ which confirms
\begin{align}\label{change_basis1}
    \text{tr}\big[\rho \Pi_\lambda^\text{Std}\big] 
    = \text{tr}\big[\rho \big((\Sch(n))^\dagger\cdot\Pi_\lambda^\text{Sch}\cdot \Sch(n)\big)\big]
    = \text{tr}\big[\tilde{\rho}\Pi_\lambda^\text{Sch}\big]
\end{align}
where $\rho$ and $\tilde{\rho}$ are the initial states in the standard and Schur basis respectively.

\section{An algorithm for weak Schur Sampling}\label{Sec:wSt}
In this section we introduce our algorithm for weak Schur sampling, a protocol which directly outputs the Young label without the need of a quantum Schur transform.
We will see that our algorithm is more efficient in terms of number of operations and significantly more efficient in the memory requirements than other methods which utilize quantum Schur transforms, or generalized phase estimation.
Thus it is an optimal alternative in any process requiring weak Schur sampling.

The following is a `high-level' implementation of our algorithm for weak Schur sampling:

\begin{algorithm}[H]\label{alg:streamalgo}
\SetAlgoLined
\caption{An algorithm for weak Schur sampling}
\SetKwInOut{Input}{Input}
\SetKwInOut{Output}{Output}
\SetKwInOut{Init}{Initialization}
\SetKwRepeat{Repeat}{repeat}{until}
\Input{A stream of $n$ qudits}
\Output{A partition $\lambda\vdash n$}
initialization\;
receive first qubit\;
$\lambda = (1)$\;
\For{$k=1$ to $n-1$}{
receive the $(k+1)$-st qudit\;
apply $\CGl: \Ql_\lambda^d\otimes\Ql^d_{(1)}\rightarrow\bigoplus_{j=0}^{d-1}\Ql^d_{\lambda+\mathbf{e}_j}$\;
apply measurement $\{\tilde{\Pi}_{\lambda+\mathbf{e}_j}^\text{Sch}\}_j$ to obtain $\Ql^d_{\lambda+\mathbf{e}_j}$ for a $j\in\{0,...,d-1\}$\;
$\lambda \leftarrow \lambda+\mathbf{e}_j$\;
}
\Return $\lambda$
\end{algorithm}

Recall that for $\lambda\vdash k$ the direct summation $\bigoplus_{j=0}^{d-1}\Ql^d_{\lambda+\mathbf{e}_j}$ in the algorithm above only ranges over valid partitions $\lambda+\mathbf{e}_j$ of $k+1$.
That is, we set $\Ql^d_{\lambda+\mathbf{e}_j} = \emptyset$ when ${\lambda+\mathbf{e}_j}$ is not a valid partition of $k+1$.

The crucial step in Algorithm \ref{alg:streamalgo} is the application of the projection $\{\tilde{\Pi}_{\lambda+\mathbf{e}_j}^\text{Sch}\}_j$, where for each $j\in\{0,...,d-1\}$, $\tilde{\Pi}_{\lambda+\mathbf{e}_j}^\text{Sch}$ is defined as:
\begin{align}\label{wwSs}
     \tilde{\Pi}_{\lambda+\mathbf{e}_j}^\text{Sch}:= \bigoplus_{\lambda'\vdash (k+1)}\delta_{\lambda+\mathbf{e}_j,\lambda'}\mathbb{I}_{\dim(\mathcal{Q}^d_{\lambda'})}.
\end{align}
The measurement $\{\tilde{\Pi}_{\lambda+\mathbf{e}_j}^\text{Sch}\}_j$ projects the direct sum $\bigoplus_{j=0}^{d-1}\Ql^d_{\lambda+\mathbf{e}_j}$ onto $\Ql^d_{\lambda+\mathbf{e}_j}$ for a particular value of $j$. 

This ensures that at the beginning of the $k$-th iteration of the algorithm, the $k$ qudit state will lie completely in (one specific copy of) the irrep $\Qld$ for known $\lambda \vdash k$. 
This implies a couple of things:
\begin{enumerate}
    \item Upon receipt of the $(k+1)$-th qudit, the algorithm performs the single Clebsch-Gordan transform $\CGl$. 
    \item The algorithm needs to store only $\Big\lceil \log_d\Big( \sum_{j=0}^{d-1} \dim(\Ql^d_{\lambda+\mathbf{e_j}})\Big)\Big\rceil$ qudits in each iteration.
\end{enumerate}
The first point implies that that Algorithm \ref{alg:streamalgo} is strictly less computationally intensive than a quantum Schur transform. The second point establishes the fact that our weak Schur sampling algorithm requires only a logarithmic number of qudits of memory, which constitutes an exponential improvement over quantum Schur transform and GPE algorithms. 
It remains to show that the probability of measuring the Young label $\lambda\vdash n$ after application of projections in Eq. \eqref{wSs_std} or Eq. \eqref{wSs} equals the probability of Algorithm \ref{alg:streamalgo} outputting the same label.

\begin{theorem}[Correctness of Algorithm \ref{alg:streamalgo}]
Let $\rho$ be an $n$ qudit state and let $\Lambda^\textrm{wSch}$ denote the output label of Algorithm \ref{alg:streamalgo}.
Then for all $\lambda\vdash n$ 
\begin{align}\thlabel{wSsCorrectness}
    \Pr[\Lambda^\textup{wSch} = \lambda] = \Tr[\rho \Pi_\lambda^\textup{Std}]
\end{align}
where $\Pi_\lambda^\textup{Std}$ are the projections in \eqref{wSs_std}.
\end{theorem}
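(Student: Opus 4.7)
The plan is to show the equality of probability distributions by recasting the procedure ``apply the Schur transform $\Sch(n)$ then measure $\lambda$ in the Schur basis'' as an equivalent procedure in which intermediate measurements are inserted between the iterative stages of $\Sch(n)$, and then observing that this enlarged procedure coincides with Algorithm~\ref{alg:streamalgo}.

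The first step is to invoke Eq.~\eqref{change_basis1}, which reduces the claim to showing $\Pr[\Lambda^\textup{wSch} = \lambda] = \Tr[\tilde{\rho}\,\Pi_\lambda^\textup{Sch}]$ with $\tilde{\rho} = \Sch(n)\,\rho\,\Sch(n)^\dagger$, so that we may work entirely in the Schur basis. Using Eq.~\eqref{SchPractical}, I would then unfold $\Sch(n)$ into the composition of super Clebsch-Gordan transforms $V_k := \SCG(k)\otimes \mathbb{I}_d^{\otimes n-1-k}$. The structural property I would highlight, visible from Eq.~\eqref{SuperCG}, is that each $V_k$ is block-diagonal with respect to the partition label $\mu\vdash k$ of the first $k$ qudits, acting as $\mathbb{I}_{\dim \mathcal{P}_\mu}\otimes U^\textrm{CG}_\mu$ within each block.

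The central step is then to insert, between $V_{k-1}$ and $V_k$ for each $k=2,\ldots,n$, the projective measurement $\{\Pi_\mu^{(k)}\}_{\mu\vdash k}$ onto the $\mu$-isotypic subspaces of the first $k$ qudits. The goal is to argue that these intermediate projections commute with all later super CG transforms $V_{k'}$ for $k'\geq k$, as well as with the final projector $\Pi_\lambda^\textup{Sch}$. Since $V_{k'}$ is block-diagonal in the $(\mu'\vdash k')$ labels and acts as the identity on each multiplicity factor $\mathcal{P}_{\mu'}$, it suffices to recognize $\Pi_\mu^{(k)}$ as a direct sum over $\mu'\vdash k'$ of projections onto those subspaces of $\mathcal{P}_{\mu'}$ spanned by Gelfand-Tsetlin basis vectors whose associated Young-tableau path passes through $\mu$ at step $k$. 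Once this identification is in place, the commutation is immediate, and inserting the intermediate measurements and marginalizing over their outcomes leaves the final distribution of $\lambda$ unchanged.

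Conditional on a history $\mu^{(1)},\ldots,\mu^{(k)}$ of intermediate outcomes, the post-measurement state lies in the single block $\mathcal{P}_{\mu^{(k)}}\otimes\mathcal{Q}^d_{\mu^{(k)}}$ on the first $k$ qudits, so $V_k$ reduces to $\mathbb{I}_{\dim\mathcal{P}_{\mu^{(k)}}}\otimes U^\textrm{CG}_{\mu^{(k)}}$ and acts trivially on the multiplicity factor. Since neither subsequent operations nor the final projection act non-trivially on $\mathcal{P}_{\mu^{(k)}}$, this factor can be traced out without affecting any probability, leaving exactly the step performed in Algorithm~\ref{alg:streamalgo}: apply $U^\textrm{CG}_{\mu^{(k)}}$ to $\mathcal{Q}^d_{\mu^{(k)}}\otimes\mathbb{C}^d$ and measure via $\{\tilde{\Pi}_{\mu^{(k)}+\mathbf{e}_j}^\textrm{Sch}\}_j$. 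Summing the resulting path probabilities over all paths ending at $\lambda$ then yields $\Pr[\Lambda^\textup{wSch}=\lambda]$. I expect the main obstacle to be the commutation argument in the third paragraph: giving a precise definition of $\Pi_\mu^{(k)}$ as an operator on the $n$-qudit Schur space and verifying it commutes with every $V_{k'}$ for $k'\geq k$. This reduces to making explicit the inductive Gelfand-Tsetlin construction, in which each basis vector of $\mathcal{P}_\lambda$ for $\lambda\vdash n$ is labeled by a standard Young tableau, equivalently a sequence $\mu^{(1)}\to\cdots\to\mu^{(n)}=\lambda$; once this correspondence is spelled out, everything else follows from the block-diagonal form of each super CG transform.
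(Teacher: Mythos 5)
Your proposal is correct and follows essentially the same route as the paper's proof: both unfold $\Sch(n)$ into the super Clebsch--Gordan stages of Eq.~\eqref{SchPractical} and reduce the claim to the fact that interleaving the partition measurements with those stages leaves the statistics of the final Young label unchanged, via the block-diagonality of each $\SCG(k)$, its trivial action on the multiplicity spaces, and the bijection between paths $\mu^{(1)}\to\cdots\to\mu^{(n)}=\lambda$ and the labels $p_\lambda\in\Pl$. The only difference is presentational: the paper packages this as the operator identity $\tilde{U}^\text{wSch}_{\lambda,p_\lambda}=U^\text{wSch}_{\lambda,p_\lambda}$ proved by induction on the number of qudits, whereas you phrase it as a deferred-measurement/commutation argument, which establishes the same fact.
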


\begin{proof}
Viewing the irreps $\Pl$ of the symmetric group as the multiplicity spaces of the irreps $\Qld$ of the unitary group, we can rewrite Schur-Weyl duality as
\begin{align}
    \Cdn \overset{\Sn \times \Ud}{\cong} \bigoplus_{\lambda\vdash n}\bigoplus_{p_\lambda\in\Pl} \mathcal{Q}^d_{\lambda, p_\lambda},
\end{align}
where the additional $p_\lambda$ label in $\mathcal{Q}^d_{\lambda, p_\lambda}$ is interpreted to index the copies of $\Qld$.

Fix Young label $\lambda\vdash n$ and recall the projection in the standard basis onto isotypic space $\Pl\otimes\Qld$ from Eq. \eqref{wSs_std}
\begin{align}
    \Pi_\lambda^\text{Std} = \sum_{p_\lambda\in \Pl}\sum_{q_\lambda\in \Qld} \ketbra{(\lambda, p_\lambda, q_\lambda)}.
\end{align}
We can similarly define 
\begin{align}
    \Pi^\text{Std}_{\lambda,p_\lambda}=\sum_{q_\lambda\in\Qld}\ketbra{(\lambda, p_\lambda, q_\lambda)}
\end{align}
as the projection in the standard basis onto the $p_\lambda$-th copy of $\Qld$.
Additionally let
\begin{align}
    \Pi^\text{Sch}_{\lambda}=\Sch(n)\cdot\Pi^\text{Std}_{\lambda}\cdot(\Sch(n))^\dagger, \qquad\text{and}\qquad    \Pi^\text{Sch}_{\lambda,p_\lambda}=\Sch(n)\cdot\Pi^\text{Std}_{\lambda,p_\lambda}\cdot(\Sch(n))^\dagger
\end{align}
be the respective projections in the Schur basis. 
Note that $\Pi^\text{Sch}_{\lambda}$ is also given in in Eq. \eqref{wSs}.

Now, define 
\begin{align}\label{isom_l_pl}
    U^\text{wSch}_{\lambda, p_\lambda}:\Cdn \rightarrow \mathcal{Q}^d_{\lambda, p_\lambda},\qquad U^\text{wSch}_{\lambda, p_\lambda} := \Pi^\text{Sch}_{\lambda,p_\lambda}\cdot\Sch(n)
\end{align}
to be a map from $\Cdn$ in the standard basis to $\mathcal{Q}^d_{\lambda, p_\lambda}$ in the Schur basis.
Similarly, let
\begin{align}\label{isom_pl}
    U^\text{wSch}_{\lambda}:\Cdn \rightarrow \mathcal{Q}^d_{\lambda},\qquad U^\text{wSch}_{\lambda} := \Pi^\text{Sch}_\lambda\cdot\Sch(n)
\end{align}
be a map from $\Cdn$ in the standard basis to the isotypic subspace $\Pl\otimes\Qld$ in the Schur basis.
Note that from these definitions we have both
\begin{align}\label{wSs_l_pl}
    \sum_{p_\lambda\in\Pl}\Pi^\text{Std}_{\lambda,p_\lambda} = \Pi^\text{Std}_{\lambda}, \qquad\qquad U^\text{wSch}_{\lambda} := \sum_{p_\lambda\in \Pl} U^\text{wSch}_{\lambda, p_\lambda}.
\end{align}

To be precise, we refrain from calling $U^\text{wSch}_{\lambda, p_\lambda}$ and $U^\text{wSch}_{\lambda}$ co-isometries\footnote{A linear map $A:V\rightarrow W$ between Hilbert spaces $V,W$ is a co-isometry if $AA^\dagger=\mathbb{I}_W$.} as we interpret the respective co-domains to be embedded into the full space $\bigoplus_\lambda \Pl\otimes\Qld$.
Intuitively, we view $U^\textrm{wSch}_\lambda$ as the $d^n\times d^n$ square matrix $\Sch(n)$ with zeros in all the rows which do not correspond to the basis of isotypic space $\Pl\otimes\Qld$, and $U^\text{wSch}_{\lambda, p_\lambda}$ as the $d^n\times d^n$ square matrix $\Sch(n)$ with zeros in all the rows which do not correspond to the basis of the $p_\lambda$-th copy of space $\Qld$.

From $U^\text{wSch}_{\lambda}$ we define the map\footnote{Note that this map is only CP and not CPTP as $U^\text{wSch}_{\lambda}$ are not co-isometries by our choice of co-domain. 
This choice is purely for ease of presentation as the same results would apply if $U^\text{wSch}_{\lambda}$ and $U^\text{wSch}_{\lambda, p_\lambda}$ were co-isometries, up to careful padding when multiplying maps together. } $\mathcal{U}^\text{wSch}_{\lambda}:\rho\mapsto U^\text{wSch}_{\lambda}\rho (U^\text{wSch}_{\lambda})^\dagger$ for states $\rho$ in the standard basis and note that 
\begin{align}\label{wSs_trace1}
    \text{tr}\big[\mathcal{U}^\text{wSch}_{\lambda}(\rho)\big] 
    &= \text{tr}\big[U^\text{wSch}_{\lambda}\rho(U^\text{wSch}_{\lambda})^\dagger\big] \\
    &= \text{tr}\big[\big(\Pi^\text{Sch}_\lambda\cdot\Sch(n)\big)\rho\big((\Sch(n))^\dagger\cdot\Pi^\text{Sch}_\lambda\big)\big] \\
    &= \text{tr}\big[{\rho}\Pi^\text{Std}_\lambda\big]\\
    &= \sum_{p_\lambda\in\Pl}\text{tr}\big[\rho\Pi^\text{Std}_{\lambda,p_\lambda}\big],
\end{align}
where in the penultimate equality we used the change of basis in Eq. \eqref{change_basis1} and in the last equality we used Eq. \eqref{wSs_l_pl}. 

Similarly, define the map $\mathcal{U}^\text{wSch}_{\lambda,p_\lambda}:\rho\mapsto U^\text{wSch}_{\lambda,p_\lambda}\rho (U^\text{wSch}_{\lambda,p_\lambda})^\dagger$ for states $\rho$ in the standard basis and note that
\begin{align}
    \text{tr}\big[\mathcal{U}^\text{wSch}_{\lambda,p_\lambda}(\rho)\big] 
    &= \text{tr}\big[U^\text{wSch}_{\lambda,p_\lambda}\rho(U^\text{wSch}_{\lambda,p_\lambda})^\dagger\big] \\
    &= \text{tr}\big[\big(\Pi^\text{Sch}_{\lambda,p_\lambda}\cdot\Sch(n)\big)\rho\big((\Sch(n))^\dagger\cdot\Pi^\text{Sch}_{\lambda,p_\lambda}\big)\big] \\
    &= \text{tr}\big[\rho\Pi^\text{Std}_{\lambda,p_\lambda}\big],
\end{align}
which readily implies 
\begin{align}
    \text{tr}\big[\mathcal{U}^\text{wSch}_{\lambda}(\rho)\big] = \sum_{p_\lambda\in\Pl}\text{tr}\big[\mathcal{U}^\text{wSch}_{\lambda,p_\lambda}(\rho)\big].\label{wSs_trace2}
\end{align}

Now, let $\pi(\lambda) = \{\lambda^1,...,\lambda^{n-1}, \lambda^n\}$ be an $n$-tuple of partitions such that
\begin{enumerate}
    \item $\lambda^k\vdash k$;
    \item $\lambda^{k+1} = \lambda^k + \mathbf{e}_j$ for some valid $j\in\{0,...,d-1\}$;
    \item $\lambda^n = \lambda$.
\end{enumerate}
We use $\pi(\lambda)$ to denote the `path' of Young labels traversed in Algorithm \ref{alg:streamalgo} from $\lambda^1=(1)$ to its output $\lambda\vdash n$.
Since Schur-Weyl duality (Eq. \eqref{SWDuality}) implies that each irrep $\Qld$ has $\dim \Pl$ copies, there are $\dim \Pl$ distinct paths\footnote{This also follows intuitively from the Young diagram representation of $\lambda=(\lambda_0,...,\lambda_{d-1})\vdash n$: a diagram of $d$ rows each containing $\lambda_i$ boxes which can be used to `represent' a particular irrep of $\Qld$. The number of paths is the number of ways to get to the Young diagram of $\lambda\vdash n$ from the Young diagram of $(1)$ by adding one box at a time in such a way that each resulting diagram corresponds to a valid partition. This corresponds to the dimension of $\Pl$ \cite{GoWa98} which is given by the hook's length formula.}
$\pi(\lambda)$, which we denote $\pi_{p_\lambda}(\lambda)$.

Fix partition $\lambda$ and index $p_\lambda$ and suppose an execution of Algorithm \ref{alg:streamalgo} traverses the path $\pi_{p_\lambda}(\lambda)$. 
Let $\rho^k$ be the $k$-qudit input state to the $k$-th iteration of the for loop in Algorithm \ref{alg:streamalgo}.
By construction, the state $\rho^k$ lives in the space $\mathcal{Q}^d_{\lambda^k}$ for $\lambda^k\in\pi_{p_\lambda}(\lambda)$.
The $k$-th iteration of Algorithm \ref{alg:streamalgo} corresponds to an application of the map
\begin{align}
    \tilde{\mathcal{U}}_{\lambda,p_\lambda}^k:\rho^k\mapsto \tilde{U}_{\lambda,p_\lambda}^k \rho^k (\tilde{U}_{\lambda,p_\lambda}^k)^\dagger,
\end{align}
where $\tilde{U}_{\lambda,p_\lambda}^k$ is given by 
\begin{align}
    \tilde{U}_{\lambda,p_\lambda}^k:\mathcal{Q}^d_{\lambda^k}\otimes\mathbb{C}^d\rightarrow\mathcal{Q}^d_{\lambda^{(k+1)}}, \qquad  \tilde{U}_{\lambda,p_\lambda}^k := \tilde{\Pi}_{\lambda^{k+1}}^\textrm{Sch}\cdot U^\textrm{CG}_{\lambda^k},
\end{align}
and $\tilde{\Pi}_{\lambda^{k+1}}^\textrm{Sch}$ is defined in \eqref{wwSs}.

Combining all $n-1$ iterations, Algorithm \ref{alg:streamalgo} corresponds to an application of the map
\begin{align}\label{wSt_cptp}
    \tilde{\mathcal{U}}_{\lambda,p_\lambda}^\text{wSch}:\rho\mapsto \tilde{U}_{\lambda,p_\lambda}^\text{wSch}\rho(\tilde{U}_{\lambda,p_\lambda}^\text{wSch})^\dagger,
\end{align}
where $\tilde{U}_{\lambda,p_\lambda}^\text{wSch}$ is defined by
\begin{align}\label{wst_isom}
    \tilde{U}_{\lambda,p_\lambda}^\text{wSch}:\Cdn\rightarrow\mathcal{Q}^d_{\lambda,p_\lambda}, \qquad \tilde{U}_{\lambda,p_\lambda}^\text{wSch}=\prod_{k=1}^{n-1} \Big(\tilde{\Pi}_{\lambda^{k+1}}^\textrm{Sch}\cdot U^\textrm{CG}_{\lambda^k}\Big) \otimes \mathbb{I}_d^{\otimes (n-k-1)},
\end{align}
where we interpret the product of non-commuting factors to be in the order defined in Algorithm \ref{alg:streamalgo}.
The identities are required for padding as iteration $k$ acts on $(k+1)$ out of the $n$ qudits, but moving forward they will be omitted.

To complete the proof, it suffices to show that $\tilde{U}_{\lambda,p_\lambda}^\text{wSch}=U_{\lambda,p_\lambda}^\text{wSch}$ as this would imply $\tilde{\mathcal{U}}_{\lambda,p_\lambda}^\text{wSch}=\mathcal{U}_{\lambda,p_\lambda}^\text{wSch}$.
Then, it would follow from equations \eqref{wSs_trace1}-\eqref{wSs_trace2} that
\begin{align}
    \text{tr}\big[{\rho}\Pi^\text{Std}_\lambda\big] 
    = \text{tr}\big[\mathcal{U}^\text{wSch}_{\lambda}(\rho)\big] 
    = \sum_{p_\lambda\in\Pl}\text{tr}\big[\mathcal{U}^\text{wSch}_{\lambda,p_\lambda}(\rho)\big]
    = \sum_{p_\lambda\in\Pl}\text{tr}\big[\tilde{\mathcal{U}}^\text{wSch}_{\lambda,p_\lambda}(\rho)\big].
\end{align}
The leftmost side of this expression is the probability that a weak Schur sampling protocol outputs partition $\lambda\vdash n$ on input $\rho$ in the standard basis.
The term $\text{tr}\big[\tilde{\mathcal{U}}^\text{wSch}_{\lambda,p_\lambda}(\rho)\big]$ on the right is precisely the probability that Algorithm \ref{alg:streamalgo} traverses path $\pi_{p_\lambda}(\lambda)$ and outputs partition $\lambda\vdash n$ on input $\rho$ in the standard basis and the summation over the indices $p_\lambda\in\Pl$ simply accounts for the possible choice of paths. 
That is, 
\begin{align}
    \sum_{p_\lambda\in\Pl}\text{tr}\big[\tilde{\mathcal{U}}^\text{wSch}_{\lambda,p_\lambda}(\rho)\big] = \Pr[\Lambda^\text{wSch}=\lambda].
\end{align}
In essence, we want to show that one execution of Algorithm \ref{alg:streamalgo} outputting partition $\lambda\vdash n$ and traversing $\pi_{p_\lambda}(\lambda)$ is equivalent to an application of (the map associated to) $U^\text{wSch}_{\lambda,p_\lambda}=\Pi^\text{Sch}_{\lambda,p_\lambda}\cdot\Sch(n)$, the $d^n\times d^n$ matrix $\Sch(n)$ with zeros everywhere except in the rows corresponding to the basis of $p_\lambda$-th copy of $\Qld$.
Hence, accounting for all paths $\pi_{p_\lambda}(\lambda)$ yields $\sum_{p_\lambda\in\Pl}U^\text{wSch}_{\lambda,p_\lambda}=\Pi^\text{Sch}_{\lambda}\cdot\Sch(n)$, the $d^n\times d^n$ matrix $\Sch(n)$ with zeros everywhere except in the rows corresponding to the basis of $\Pl\otimes\Qld$.

We show that $\tilde{U}_{\lambda,p_\lambda}^\text{wSch}=U_{\lambda,p_\lambda}^\text{wSch}$ by induction on the number of qudits:
For the base case of $\lambda\vdash 2$, it is clear that for the only possible path $\pi(\lambda) = \{(1), \lambda\}$
\begin{align}
    \tilde{U}_{\lambda,p_\lambda}^\text{wSch}
    &:= \tilde{\Pi}_\lambda^\text{Sch} U^\text{CG}_{(1)}
    = \Pi_{\lambda}^\text{Sch} \Sch(2)
    = {U}_{\lambda,p_\lambda}^\text{wSch},
\end{align}
which follows from the definition in Eq. \eqref{wst_isom} and the facts that $\Sch(2)=U^\text{CG}_{(1)}$ and $\Pi_{\lambda}^\text{Sch}=\tilde{\Pi}_\lambda^\text{Sch}$ for $\lambda\vdash 2$.

Now suppose that for $\lambda^n\vdash n$ and path $\pi_{p_{\lambda^n}}(\lambda^n)$ we have that $\tilde{U}_{{\lambda^n},p_{\lambda^n}}^\text{wSch}=U_{{\lambda^n},p_{\lambda^n}}^\text{wSch}$ and consider $\lambda = (\lambda^n + \mathbf{e_j}) \vdash (n+1)$ for some valid $j=0,...,d-1$ with corresponding path $\pi_{p{_\lambda}}(\lambda) = \pi_{p_{\lambda^n}}(\lambda^n)\cup\{\lambda\}$.
Then
\begin{align}
    \tilde{U}_{\lambda,p_\lambda}^\text{wSch}
    &= \prod_{k=1}^{n} \Big(\tilde{\Pi}_{\lambda^{k+1}}^\textrm{Sch}\cdot U^\textrm{CG}_{\lambda^k}\Big)\\
    &= \Big(\tilde{\Pi}^\text{wSch}_\lambda \cdot \CGl\Big) \cdot \Bigg(\prod_{k=1}^{n-1} \Big( \tilde{\Pi}_{\lambda^{k+1}}^\textrm{Sch}\cdot U^\textrm{CG}_{\lambda^k} \Big)\otimes \mathbb{I}_d\Bigg) \\
    &= \Big(\tilde{\Pi}^\text{wSch}_\lambda \cdot \CGl\Big) \cdot \Big( \tilde{U}_{\lambda^n,p_{\lambda^n}}^\text{wSch} \otimes \mathbb{I}_d \Big) \\
    &= \Big(\tilde{\Pi}^\text{wSch}_\lambda \cdot \CGl\Big)\cdot \Big( U_{\lambda^n,p_{\lambda^n}}^\text{wSch} \otimes \mathbb{I}_d \Big)
\end{align}
From the definition, $U_{\lambda^n,p_{\lambda^n}}^\text{wSch}:\Cdn\rightarrow \mathcal{Q}^d_{\lambda^n,p_{\lambda^n}}$ is the $d^n\times d^n$ matrix containing the $\dim\mathcal{Q}^d_{\lambda^n}$ vectors spanning $\mathcal{Q}^d_{\lambda^n,p_{\lambda^n}}$. 
Further, 
$\CGl:\mathcal{Q}^d_{\lambda^n,p_{\lambda^n}}\otimes\mathbb{C}^d\rightarrow \bigoplus_{j=0}^{d-1} \mathcal{Q}^d_{\lambda^n+\mathbf{e_j},p_{\lambda^n+\mathbf{e_j}}}$ maps the product basis of $\mathcal{Q}^d_{\lambda^n,p_{\lambda^n}}\otimes \mathbb{C}^d$ to the Schur basis of the direct sum $\bigoplus_{j=0}^{d-1} \mathcal{Q}^d_{\lambda^n+\mathbf{e_j},p_{\lambda^n+\mathbf{e_j}}}$.
Namely, the $d^{n+1}\times d^{n+1}$ matrix $\CGl\cdot U_{\lambda^n,p_{\lambda^n}}^\text{wSch}$ contains the basis vectors of the spaces $\mathcal{Q}^d_{\lambda^n+\mathbf{e_j}, p_{\lambda^n+\mathbf{e_j}}}$ for all $j\in\{0,...,d-1\}$ as rows.
Since $\tilde{\Pi}^\text{wSch}_\lambda$ picks out precisely the $\dim \Qld$ vectors corresponding to the basis of $\mathcal{Q}^d_{\lambda^n+\mathbf{e_j}, p_{\lambda^n+\mathbf{e_j}}}$ such that $\lambda = (\lambda^n+\mathbf{e_j})$, we conclude that $(\tilde{\Pi}^\text{wSch}_\lambda \cdot \CGl) U_{\lambda^n,p_{\lambda^n}}^\text{wSch} $ is the $d^{n+1}\times d^{n+1}$ matrix containing only the basis vectors of $\Qld$ and is therefore equivalent to $U_{\lambda,p_{\lambda}}^\text{wSch}$, as required.
\end{proof}

Note that in the proof of \thref{wSsCorrectness} we establish a one-to-one correspondence between the `path' $\pi(\lambda)$ of Young labels obtained through the $n-1$ iterations of Algorithm \ref{alg:streamalgo}, and each of the labels $p_\lambda\in\Pl$.
It follows that our algorithm may be modified to output the symmetric label $p_\lambda$ simply by looking at the resulting $\pi(\lambda)$.

In the remainder of the article, we propose an explicit implementation for qubits and analyse its memory requirements and runtime, which readily generalise to the qudit case.

\subsection{Weak Schur sampling algorithm for qubits}
During the $k$-th iteration of Algorithm \ref{alg:streamalgo} on qubits, the quantum device begins with irrep $\Qlt$ with $\lambda=(\lambda_0,\lambda_1)\vdash k$, applies $\CGl$ to obtain $\mathcal{Q}^2_{\lambda+\mathbf{e}_0} \oplus \mathcal{Q}^2_{\lambda+\mathbf{e}_1}$, and projects onto $\mathcal{Q}^2_{\lambda+\mathbf{e}_j}$ for $j\in\{0,1\}$.

To encode these operations, the quantum machine will maintain two registers, $L$ and $Q$ throughout each iteration. After application of the Clebsch-Gordan transform:
\begin{itemize}
    \item Register $L$ (for Lambda) encodes the Young label $\lambda+\mathbf{e}_j$ and requires a single qubit to be specified: $\ket{j}_L$ for $j\in\{0,1\}$.
    \item Register $Q$ (for $\Qld$) contains the state of the qubits within irrep $(\lambda+\mathbf{e}_j)\vdash(k+1)$. Register $Q$ needs to be at most $\lceil\log_2(k+2)\rceil$ qubits as for arbitrary $\lambda\vdash k$ we have $\dim \mathcal{Q}^2_\lambda = \lambda_0 - \lambda_1 +1 = k - 2\lambda_1+1$ which is maximized when $\lambda_1 = 0$.
\end{itemize}
Thus, iteration $k$ requires $\lceil\log_2(2k+4)\rceil$ qubits.

Following the Clebsch-Gordan transform, the quantum device determines $j$ by measuring the $L$ register and discards a qubit if 
\begin{align}\label{QubitRemoval}
    \lceil\log_2(2k+4)\rceil \neq \lceil\log_2(k+3)\rceil.
\end{align}
This is because the $(k+1)$-th iteration requires $\lceil\log_2(2(k+1)+4)\rceil$ qubits counting the additional $(k+2)$-th qubit it receives.
Thus, iteration $(k+1)$ must begin with $\lceil\log_2(2(k+1)+4)\rceil-1 = \lceil\log_2(k+3)\rceil$ qubits.
In particular, iteration $k$ begins with $\lceil\log_2(k+2)\rceil$ qubits describing the irrep $\Qlt$ for $\lambda\vdash k$.
Note that $\lceil\log_2(k+2)\rceil$ overestimates the number of qubits required to describe $\dim \mathcal{Q}^2_\lambda = k - 2\lambda_1+1\leq k+1$, so qubits need to be padded accordingly.
Since equality in \eqref{QubitRemoval} only occurs logarithmically often, we conclude that our algorithm requires only $\lceil\log_2(2(n-1)+4)\rceil=O(\log_2(n))$ qubits.

The explicit implementation of our weak Schur sampling algorithm on these registers is as in Algorithm \ref{alg:pracstreamalgo}.
In the algorithm, the binary tree \textbf{PartTree} (for partition tree) is maintained to determine the path $\pi(\lambda)$ for output $\lambda$. The path $\pi(\lambda)$ fully determines $p_\lambda\in \Pl$. 
The rearranging in steps $3$ and $6$ ensure that register $L$ encodes the information about partitions $\lambda+\mathbf{e}_0$ and $\lambda+\mathbf{e}_1$, as is illustrated in Figure \ref{fig:practical_vectors}.
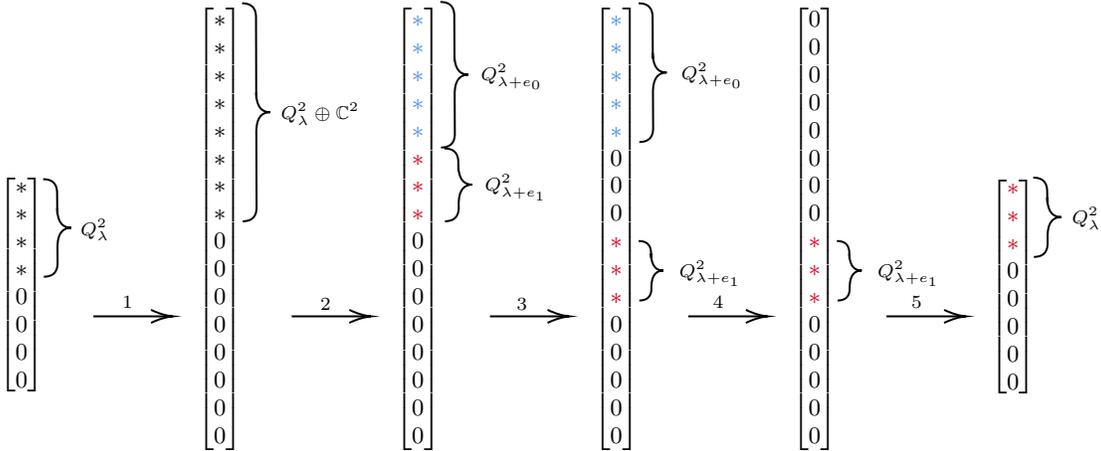
\begin{figure}[h]
    \centering
    \tikzset{every picture/.style={line width=0.75pt}} 

\begin{tikzpicture}[x=0.75pt,y=0.75pt,yscale=-1,xscale=1]

\draw   (35,140) .. controls (39.67,139.97) and (41.98,137.62) .. (41.95,132.95) -- (41.9,125.34) .. controls (41.85,118.67) and (44.16,115.32) .. (48.83,115.29) .. controls (44.16,115.32) and (41.81,112.01) .. (41.76,105.34)(41.78,108.34) -- (41.71,97.73) .. controls (41.68,93.06) and (39.33,90.75) .. (34.66,90.78) ;
\draw    (60,160) -- (98,160) ;
\draw [shift={(100,160)}, rotate = 180] [color={rgb, 255:red, 0; green, 0; blue, 0 }  ][line width=0.75]    (10.93,-3.29) .. controls (6.95,-1.4) and (3.31,-0.3) .. (0,0) .. controls (3.31,0.3) and (6.95,1.4) .. (10.93,3.29)   ;
\draw   (135.33,112) .. controls (140,111.99) and (142.32,109.65) .. (142.31,104.98) -- (142.19,66.98) .. controls (142.17,60.31) and (144.49,56.97) .. (149.16,56.96) .. controls (144.49,56.97) and (142.15,53.65) .. (142.13,46.98)(142.14,49.98) -- (142.02,8.98) .. controls (142.01,4.31) and (139.67,1.99) .. (135,2) ;
\draw    (160,160) -- (198,160) ;
\draw [shift={(200,160)}, rotate = 180] [color={rgb, 255:red, 0; green, 0; blue, 0 }  ][line width=0.75]    (10.93,-3.29) .. controls (6.95,-1.4) and (3.31,-0.3) .. (0,0) .. controls (3.31,0.3) and (6.95,1.4) .. (10.93,3.29)   ;
\draw   (235.33,74.78) .. controls (240,74.76) and (242.32,72.42) .. (242.3,67.75) -- (242.21,48.03) .. controls (242.18,41.36) and (244.5,38.02) .. (249.17,38) .. controls (244.5,38.02) and (242.15,34.7) .. (242.12,28.03)(242.13,31.03) -- (242.03,8.3) .. controls (242.01,3.63) and (239.67,1.31) .. (235,1.33) ;
\draw   (237.33,112) .. controls (242,111.96) and (244.31,109.61) .. (244.27,104.94) -- (244.25,103.57) .. controls (244.19,96.9) and (246.49,93.55) .. (251.16,93.51) .. controls (246.49,93.55) and (244.13,90.24) .. (244.07,83.57)(244.1,86.57) -- (244.06,82.21) .. controls (244.02,77.54) and (241.67,75.23) .. (237,75.27) ;
\draw    (260,160) -- (298,160) ;
\draw [shift={(300,160)}, rotate = 180] [color={rgb, 255:red, 0; green, 0; blue, 0 }  ][line width=0.75]    (10.93,-3.29) .. controls (6.95,-1.4) and (3.31,-0.3) .. (0,0) .. controls (3.31,0.3) and (6.95,1.4) .. (10.93,3.29)   ;
\draw   (335.34,72) .. controls (340.01,71.98) and (342.33,69.64) .. (342.31,64.97) -- (342.22,46.97) .. controls (342.19,40.3) and (344.5,36.96) .. (349.17,36.93) .. controls (344.5,36.96) and (342.16,33.64) .. (342.13,26.97)(342.14,29.97) -- (342.04,8.97) .. controls (342.02,4.3) and (339.68,1.98) .. (335.01,2) ;
\draw   (335.34,152) .. controls (339.45,151.95) and (341.49,149.87) .. (341.44,145.75) -- (341.44,145.75) .. controls (341.37,139.87) and (343.4,136.91) .. (347.52,136.86) .. controls (343.4,136.91) and (341.31,133.99) .. (341.25,128.11)(341.28,130.75) -- (341.25,128.11) .. controls (341.2,123.99) and (339.12,121.95) .. (335,122) ;
\draw    (360,160) -- (398,160) ;
\draw [shift={(400,160)}, rotate = 180] [color={rgb, 255:red, 0; green, 0; blue, 0 }  ][line width=0.75]    (10.93,-3.29) .. controls (6.95,-1.4) and (3.31,-0.3) .. (0,0) .. controls (3.31,0.3) and (6.95,1.4) .. (10.93,3.29)   ;
\draw   (435.34,152) .. controls (439.45,151.95) and (441.49,149.87) .. (441.44,145.76) -- (441.44,145.76) .. controls (441.37,139.87) and (443.4,136.91) .. (447.52,136.86) .. controls (443.4,136.91) and (441.31,133.99) .. (441.25,128.11)(441.28,130.76) -- (441.25,128.11) .. controls (441.2,123.99) and (439.12,121.95) .. (435,122) ;
\draw    (460,160) -- (498,160) ;
\draw [shift={(500,160)}, rotate = 180] [color={rgb, 255:red, 0; green, 0; blue, 0 }  ][line width=0.75]    (10.93,-3.29) .. controls (6.95,-1.4) and (3.31,-0.3) .. (0,0) .. controls (3.31,0.3) and (6.95,1.4) .. (10.93,3.29)   ;
\draw   (534.33,130) .. controls (539,129.96) and (541.31,127.61) .. (541.27,122.94) -- (541.25,119.94) .. controls (541.19,113.27) and (543.49,109.92) .. (548.16,109.88) .. controls (543.49,109.92) and (541.13,106.61) .. (541.08,99.94)(541.11,102.94) -- (541.05,96.94) .. controls (541.02,92.27) and (538.67,89.96) .. (534,90) ;

\draw (11,86.4) node [anchor=north west][inner sep=0.75pt]  [font=\small]  {$\begin{bmatrix}
*\\
*\\
*\\
*\\
0\\
0\\
0\\
0
\end{bmatrix}$};
\draw (52,109.21) node [anchor=north west][inner sep=0.75pt]  [font=\scriptsize]  {$Q_{\lambda }^{2}$};
\draw (73,148.4) node [anchor=north west][inner sep=0.75pt]  [font=\scriptsize]  {$1$};
\draw (111,2.4) node [anchor=north west][inner sep=0.75pt]  [font=\small]  {$\begin{bmatrix}
*\\
*\\
*\\
*\\
*\\
*\\
*\\
*\\
0\\
0\\
0\\
0\\
0\\
0\\
0\\
0
\end{bmatrix}$};
\draw (152.99,50.68) node [anchor=north west][inner sep=0.75pt]  [font=\scriptsize]  {$Q_{\lambda }^{2} \oplus \mathbb{C}^{2}$};
\draw (173,149.4) node [anchor=north west][inner sep=0.75pt]  [font=\scriptsize]  {$2$};
\draw (211,2.4) node [anchor=north west][inner sep=0.75pt]  [font=\small]  {$\begin{bmatrix}
\textcolor[rgb]{0.29,0.56,0.89}{*}\\
\textcolor[rgb]{0.29,0.56,0.89}{*}\\
\textcolor[rgb]{0.29,0.56,0.89}{*}\\
\textcolor[rgb]{0.29,0.56,0.89}{*}\\
\textcolor[rgb]{0.29,0.56,0.89}{*}\\
\textcolor[rgb]{0.82,0.01,0.11}{*}\\
\textcolor[rgb]{0.82,0.01,0.11}{*}\\
\textcolor[rgb]{0.82,0.01,0.11}{*}\\
0\\
0\\
0\\
0\\
0\\
0\\
0\\
0
\end{bmatrix}$};
\draw (254,31.08) node [anchor=north west][inner sep=0.75pt]  [font=\scriptsize]  {$Q_{\lambda +e_{0}}^{2}$};
\draw (256,86.9) node [anchor=north west][inner sep=0.75pt]  [font=\scriptsize]  {$Q_{\lambda +e_{1}}^{2}$};
\draw (272,149.4) node [anchor=north west][inner sep=0.75pt]  [font=\scriptsize]  {$3$};
\draw (311,2.4) node [anchor=north west][inner sep=0.75pt]  [font=\small]  {$\begin{bmatrix}
\textcolor[rgb]{0.29,0.56,0.89}{*}\\
\textcolor[rgb]{0.29,0.56,0.89}{*}\\
\textcolor[rgb]{0.29,0.56,0.89}{*}\\
\textcolor[rgb]{0.29,0.56,0.89}{*}\\
\textcolor[rgb]{0.29,0.56,0.89}{*}\\
0\\
0\\
0\\
\textcolor[rgb]{0.82,0.01,0.11}{*}\\
\textcolor[rgb]{0.82,0.01,0.11}{*}\\
\textcolor[rgb]{0.82,0.01,0.11}{*}\\
0\\
0\\
0\\
0\\
0
\end{bmatrix}$};
\draw (354,130.2) node [anchor=north west][inner sep=0.75pt]  [font=\scriptsize]  {$Q_{\lambda +e_{1}}^{2}$};
\draw (355,30.25) node [anchor=north west][inner sep=0.75pt]  [font=\scriptsize]  {$Q_{\lambda +e_{0}}^{2}$};
\draw (411,2.4) node [anchor=north west][inner sep=0.75pt]  [font=\small]  {$\begin{bmatrix}
0\\
0\\
0\\
0\\
0\\
0\\
0\\
0\\
\textcolor[rgb]{0.82,0.01,0.11}{*}\\
\textcolor[rgb]{0.82,0.01,0.11}{*}\\
\textcolor[rgb]{0.82,0.01,0.11}{*}\\
0\\
0\\
0\\
0\\
0
\end{bmatrix}$};
\draw (454,130.2) node [anchor=north west][inner sep=0.75pt]  [font=\scriptsize]  {$Q_{\lambda +e_{1}}^{2}$};
\draw (371,148.4) node [anchor=north west][inner sep=0.75pt]  [font=\scriptsize]  {$4$};
\draw (511,87.4) node [anchor=north west][inner sep=0.75pt]  [font=\small]  {$\begin{bmatrix}
\textcolor[rgb]{0.82,0.01,0.11}{*}\\
\textcolor[rgb]{0.82,0.01,0.11}{*}\\
\textcolor[rgb]{0.82,0.01,0.11}{*}\\
0\\
0\\
0\\
0\\
0
\end{bmatrix}$};
\draw (552,103.4) node [anchor=north west][inner sep=0.75pt]  [font=\scriptsize]  {$Q_{\lambda }^{2}$};
\draw (471,148.4) node [anchor=north west][inner sep=0.75pt]  [font=\scriptsize]  {$5$};

\end{tikzpicture}
    \caption{At the beginning of this sample iteration for $k=3$, the quantum device contains a description of irrep $\Qlt$ with $\lambda=(3,0)$ (which is $4$ dimensional) in $\lceil\log_2(3+2)\rceil$ qubits. 
    Step one sees the addition of the new qubit.
    Step 2 corresponds to the application of $U^\textrm{CG}_{(3)}:\mathcal{Q}^2_{(3)}\otimes\mathbb{C}^d \rightarrow \mathcal{Q}^2_{(4)} \oplus \mathcal{Q}^2_{(3,1)}$ ($5$ and $3$ dimensional respectively). 
    Step 3 applies the rearranging matrix so that $\ket{j}_L$ corresponds to $\mathcal{Q}^2_{(3)+\mathbf{e}_j}$.
    Step 4 is the measurement on the $L$ register, which in this case determines that $j=1$. 
    Step 5 checks Eq. \eqref{QubitRemoval} and removes a qubit accordingly.
    If Step 5 had instead determined that no qubit needs to be removed, Step 6 would be applied to bring the description of $\mathcal{Q}^2_{(3,1)}$ to the `top' of the vector between steps 4 and 5 (i.e. bringing the red entries in the bottom half, to the top half).
    This corresponds to updating the $L$ register $\ket{0}_L \leftarrow \ket{1}_L$ for the next iteration, which begins with $\lambda = (3,1)$.}
    \label{fig:practical_vectors}
\end{figure}

\begin{algorithm}[H]\label{alg:pracstreamalgo}
\SetAlgoLined
\caption{A practical algorithm for weak Schur sampling}
\SetKwInOut{Input}{Input}
\SetKwInOut{Output}{Output}
\SetKwInOut{Init}{Initialization}
\SetKwRepeat{Repeat}{repeat}{until}
\Input{A sequence of $n$ qubits}
\Output{A partition $\lambda\vdash n$}
initialization\;
receive first qubit in $\lambda=(1)$\;
initialize binary tree \textbf{PartTree}\;
\textbf{PartTree}.root$=(1)$\;
\For{$k=1$ to $n-1$}{
begin with a register of $\lceil\log_2(k+2)\rceil$ qubits encoding $\Qlt$\;
\nl receive the $(k+1)$-th qubit\;
\nl apply $\CGl$ \;
\nl apply rearranging matrix to update $L$ register: $\ket{0}_L$ and $\ket{1}_L$ \;
\nl apply measurement on the $L$ register to determine $j$ in $\lambda+\mathbf{e}_j$;

\eIf{$j=0$}{
    \textbf{PartTree}$(\lambda)$.right = $\lambda+\mathbf{e}_0$\;
    $\lambda \leftarrow \lambda+\mathbf{e}_0$\;}{
    \textbf{PartTree}$(\lambda)$.left = $\lambda+\mathbf{e}_1$\;
    $\lambda \leftarrow \lambda+\mathbf{e}_1$\;}

\nl \eIf{$\lceil\log_2(2k+4)\rceil\neq \lceil\log_2(k+3)\rceil$}{
   remove the qubit from the $L$ register\;}{
   keep all qubits\;
   \nl\If{$j=1$}{
   apply rearranging matrix to update $L$ register: $\ket{0}_L \leftarrow \ket{1}_L$
   }
  }
}
\Return $\lambda$, \textbf{PartTree}, final $\lceil\log_2(n+2)\rceil$ registers
\end{algorithm}

In the remainder of this section, we show 
\begin{theorem}\thlabel{QubitRuntime}
The protocol on $n$ qubits given by Algorithm \ref{alg:pracstreamalgo} decomposes to accuracy $\epsilon$ into $O(n^3\log_2(n/\epsilon))$ gates from the Clifford+T gate set. 
\end{theorem}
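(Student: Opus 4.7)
The plan is to bound the gate cost of a single iteration of Algorithm \ref{alg:pracstreamalgo} and then multiply by the number of iterations. Each of the $n-1$ iterations performs, at most: (i) the Clebsch-Gordan transform $\CGl$ on a register of $\lceil\log_2(2k+4)\rceil = O(\log n)$ qubits; (ii) two rearranging matrices acting on the $L$ and $Q$ registers, which by inspection of Figure \ref{fig:practical_vectors} are permutation matrices (conditional swaps of $O(\log n)$-qubit blocks); (iii) a single-qubit computational-basis measurement on $L$; and (iv) the tracing out of at most one qubit. Operations (iii) and (iv) are free in the Clifford+T gate count, and operation (ii) reduces to $O(\log n)$ controlled-SWAP/CNOT gates which are already in Clifford+T. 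Thus the entire cost is dominated, up to lower-order terms, by the $n-1$ Clebsch-Gordan transforms $U^{\textrm{CG}}_{\lambda^k}$.

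Next, I would invoke the explicit per-Clebsch-Gordan cost given by Kirby and Strauch \cite{KiSt18}: each qubit Clebsch-Gordan transform $\CGl$ with $\lambda\vdash k \leq n-1$ can be synthesized to accuracy $\epsilon'$ using $O(n^2\log_2(n/\epsilon'))$ Clifford+T gates. (This is precisely the building block whose $O(n^2)$ repeated application yields their $O(n^4\log(n/\epsilon))$ Schur transform; our protocol uses only $n-1$ of them rather than a super-CG at each step, which is why we save a factor of $n$.) So, deferring error-budgeting for a moment, a single iteration costs $O(n^2\log_2(n/\epsilon'))$ Clifford+T gates.

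For error propagation, I would appeal to subadditivity of operator-norm error under composition: if $\widetilde{U}^{\textrm{CG}}_{\lambda^k}$ denotes the synthesized approximation to $U^{\textrm{CG}}_{\lambda^k}$ with $\|\widetilde{U}^{\textrm{CG}}_{\lambda^k} - U^{\textrm{CG}}_{\lambda^k}\| \leq \epsilon'$, and the exact permutations/measurements are implemented as stated, then the overall diamond-norm error of Algorithm \ref{alg:pracstreamalgo} is at most $(n-1)\epsilon'$. Setting $\epsilon' = \epsilon/n$ gives total accuracy $\epsilon$ at a per-iteration cost of $O(n^2\log_2(n^2/\epsilon)) = O(n^2\log_2(n/\epsilon))$, since $\log_2(n^2/\epsilon) = 2\log_2 n + \log_2(1/\epsilon) = O(\log_2(n/\epsilon))$. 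Multiplying by $n-1$ iterations yields the claimed bound $O(n^3\log_2(n/\epsilon))$.

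The main obstacle, and the only non-routine part, is verifying that the per-Clebsch-Gordan cost quoted from \cite{KiSt18} genuinely applies in our streaming setting: in \cite{KiSt18} the CG transforms are instantiated inside a larger circuit (the super-CG) with specific ancilla layout, whereas here the CG acts on a freshly rearranged $O(\log n)$-qubit register. I would therefore include a brief argument that Kirby-Strauch's construction extracts cleanly for a stand-alone $\CGl$ (since their implementation is itself per-copy), and verify that the rearranging permutations preserve the input structure expected by their circuit. Everything else amounts to bookkeeping.
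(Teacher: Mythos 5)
Your proposal is correct and reaches the bound by essentially the same machinery as the paper: decompose the Clebsch--Gordan transforms into elementary pieces, synthesize each piece over Clifford+T, and use linear accumulation of synthesis errors (the paper's \thref{traceapprox}) to set the per-piece accuracy. The only real difference is where the work is placed. You quote a per-transform cost of $O(n^2\log_2(n/\epsilon'))$ from \cite{KiSt18} as a black box and flag its verification as ``the only non-routine part''; that verification is precisely the substantive content of the paper's proof, which derives the cost from scratch: each $\CGl$ with $\lambda\vdash k$ acts on a space of dimension $2\dim\Qlt=O(k)$ and has at most two non-zero entries per row, hence decomposes into $O(k)$ Givens rotations (two-level unitaries), each costing $O(n)$ CNOT and single-qubit gates, with each single-qubit gate synthesized to accuracy $\delta$ in $O(\log_2(1/\delta))$ Clifford+T gates by Selinger's result. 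Summing $O(k)$ over the $n-1$ iterations gives $O(n^2)$ two-level unitaries, and the paper allocates the error budget at that granularity ($\delta\le\epsilon/(cn^2)$) rather than per iteration as you do ($\epsilon'=\epsilon/n$); both allocations yield the same $O(\log_2(n/\epsilon))$ factor, so your factorization $n\times n^2$ versus the paper's $n^2\times n$ is cosmetic. One unsupported claim in your write-up: the rearranging matrices of steps 3 and 6 permute the $O(k)$ basis states of $\Qlt\otimes\mathbb{C}^2$, and it is not obvious that they reduce to only $O(\log n)$ controlled-SWAP gates; the paper instead bounds them by $O(k)$ additional two-level unitaries per iteration, which suffices and leaves the final count unchanged.
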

We will need the following result from \cite[Chapter\ 4.5]{NiCh10} which asserts that the error induced by a sequence of approximations scales at most linearly.
\begin{lemma}\thlabel{traceapprox}
Let $U_1,...,U_M$ be a sequence of unitary operators, and let $V_1,...,V_M$ be another such sequence such that each $V_j$ approximates $U_j$ to accuracy $\delta$ in the trace norm; that is:
\begin{align}
    ||U_j-V_j||_1:=\Tr\sqrt{(U_j-V_j)(U_j-V_j)^*}=\delta.
\end{align}
Then the product $V_1V_2\cdots V_M$ approximates the product $U_1U_2\cdots U_M$ to accuracy $\epsilon\leq M\delta$.
\end{lemma}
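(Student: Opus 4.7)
The plan is to establish the bound by a standard telescoping decomposition, combined with the triangle inequality and the unitary invariance of the trace norm. The key observation is that the difference of two products of unitaries can be rewritten as a sum of $M$ terms, each isolating a single factor-by-factor discrepancy $U_k - V_k$, sandwiched between products of unitaries that by unitary invariance of $\|\cdot\|_1$ do not affect the norm.

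Concretely, I would introduce the hybrid operators
\begin{equation}
W_k := V_1 V_2 \cdots V_k \, U_{k+1} \cdots U_M, \qquad k = 0, 1, \ldots, M,
\end{equation}
where the $k=0$ term has an empty $V$-product and the $k=M$ term has an empty $U$-product, both interpreted as the identity. Thus $W_0 = U_1 U_2 \cdots U_M$ and $W_M = V_1 V_2 \cdots V_M$. Telescoping then yields
\begin{equation}
W_0 - W_M \;=\; \sum_{k=1}^{M} (W_{k-1} - W_k) \;=\; \sum_{k=1}^{M} V_1 \cdots V_{k-1} \, (U_k - V_k) \, U_{k+1} \cdots U_M.
\end{equation}

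Applying the triangle inequality for the trace norm to this decomposition gives
\begin{equation}
\|U_1 \cdots U_M - V_1 \cdots V_M\|_1 \;\leq\; \sum_{k=1}^{M} \|V_1 \cdots V_{k-1} \, (U_k - V_k) \, U_{k+1} \cdots U_M\|_1.
\end{equation}
Since products of unitaries are unitary, it remains to invoke the fact that $\|A X B\|_1 = \|X\|_1$ whenever $A$ and $B$ are unitary. This follows from $(AXB)^*(AXB) = B^* X^* X B$ being unitarily similar to $X^* X$ and hence having the same spectrum, so that $\Tr\sqrt{(AXB)^*(AXB)} = \Tr\sqrt{X^*X}$. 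Applying this with $A = V_1 \cdots V_{k-1}$ and $B = U_{k+1} \cdots U_M$ collapses each summand to $\|U_k - V_k\|_1 = \delta$, and summing over $k = 1, \ldots, M$ produces the desired bound $\epsilon \leq M\delta$.

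There is no genuine obstacle here; the argument is essentially bookkeeping, and the only substantive ingredient is the two-sided unitary invariance of the trace norm, which I would state explicitly so that the collapse of each summand is transparent. The proof is dimension-independent and uses no structural properties of the $U_k$ or $V_k$ beyond unitarity, which is why it applies directly in the setting of \thref{QubitRuntime} where the $U_k$'s are ideal gates and the $V_k$'s are their Clifford+T approximants.
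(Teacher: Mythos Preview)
Your proof is correct and is the standard telescoping-plus-unitary-invariance argument. Note, however, that the paper does not supply its own proof of this lemma: it is stated as a quoted result from \cite[Chapter~4.5]{NiCh10} and used as a black box in the proof of \thref{QubitRuntime}. Your argument is precisely the textbook proof (the same one given in Nielsen--Chuang for the operator norm, transported here to the trace norm via the unitary-invariance property you spell out), so there is nothing to contrast.
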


\begin{proof}
\textit{(of \thref{QubitRuntime})}
In \cite[Chapter\ 4.5]{NiCh10} it is shown how any two-level unitary\footnote{A unitary operator that acts non-trivially in at most two dimensions} on $n$ qubits may be decomposed into $O(n)$ $CNOT$ gates and $O(n)$ single qubit gates. 
In \cite{Selinger14}, Selinger shows that each single qubit gate may be decomposed to accuracy $\delta$ into $O(\log_2(1/\delta))$ Clifford+T gates, an improvement over the $O(\log_2^p(1/\delta))$ for $p\approx 4$ required by the Solovay-Kitaev theorem \cite[Appendix\ 3]{NiCh10}.

Ref. \cite[Chapter 4.5]{NiCh10} shows that an arbitrary unitary matrix $U$ may be decomposed into a sequence of $m$ \emph{Givens rotations} (a special case of a two-level unitary), where $m$ is the number of non-zero entries below the main diagonal of $U$. Kirby and Strauch argue in \cite{KiSt18} that each Clebsch-Gordan transform has at most two non-zero entries per row, thus $m\leq 2$. Since the Clebsch-Gordan transform $\CGl$ is a square matrix of size $2\cdot\dim\Qlt$, it follows that over the $n-1$ iterations we require a total of
\begin{align}
    \sum_{k=1}^{n-1}m\cdot(2\cdot\dim\mathcal{Q}^2_{\lambda^k}) 
    \leq \sum_{k=1}^{n-1} 4(k-2\lambda^k_1+1) 
    \leq \sum_{k=1}^{n-1} 4(k+1)
    = 2n^2+2n-4 
    = O(n^2)
\end{align}
two-level unitaries. In the summation above, $\lambda^k=(\lambda^k_0,\lambda^k_1)$ denotes the partition at the beginning of the $k$-th iteration.
Note that the permutation matrices in steps $3$ and $6$ are precisely identities of size $2\cdot\dim\Qlt$ except with permuted rows, so over the $n-1$ iterations they only contribute an additional $O(n^2)$ two-level unitaries which leaves this result unchanged.

Lastly, \thref{traceapprox} implies that if each two-level unitary in the sequence of $O(n^2)$ two-level unitaries is approximated to accuracy $\delta\leq \frac{\epsilon}{cn^2}$ for some constant $c$, then the whole sequence may be approximated to accuracy $\epsilon$. 
Hence, this requires $O(n\log_2(1/\delta))=O(n\log_2(n/\epsilon))$ Clifford+T gates, which combined with the sequence of $O(n^2)$ two-level unitaries required for all the Clebsch-Gordan transforms gives the desired result.

\end{proof}

\subsection{Weak Schur sampling algorithm for qudits}
Algorithm \ref{alg:pracstreamalgo} and the analysis for qubits in the preceding section readily generalises to a weak Schur sampling algorithm for qudits.

The registers $L$ and $Q$ used to store the partitions $\lambda+\mathbf{e}_j$ for $j\in\{0,...,d-1\}$ and the description of the respective $\mathcal{Q}^d_{\lambda+\mathbf{e}_j}$ are adjusted as follows.
\begin{itemize}
    \item The $L$ register needs to be able to describe at most $d$ partitions and therefore requires one qudit.
    \item The $Q$ register needs to contain the largest dimensional irrep that might appear during iteration $k$ (after $k+1$ qudits have been received). For qudits, the dimension of $\mathcal{Q}^d_\mu$ for $\mu\vdash k$ is given in {Stanley's hook-content formula}
    \begin{align}
        \dim\mathcal{Q}^d_\mu= \frac{\prod_{0\leq i<j\leq d-1}(\mu_i-\mu_j+i-j)}{\prod_{m=1}^{d} m!}.
    \end{align}
    This is largest when $\mu_0=k$ and $\mu_j=0$ for all $j\geq 1$, in which case
    \begin{align}
        \dim\Ql^d_{(k)}\leq (k+1)^{d-1}.
    \end{align}
    Therefore, register $Q$ requires $\lceil\log_d((k+1)+1)^{d+1}\rceil$ qudits during the $k$-th iteration.
\end{itemize}
Hence the overall memory is given by the requirements of the $(n-1)$-th iteration, $\lceil\log_d((n-1)+2)^{d+1}\rceil=O(d\log_d(n))$.

Now, we use the Solovay-Kitaev theorem for arbitrary universal sets on $d$-level systems to decompose a two-level unitary into $O(n\log_2^p(\frac{1}{\delta}))$ gates to accuracy $\delta$, where $p\approx 4$. 
Using these universal sets, our weak Schur sampling algorithm for $n$ qudits decomposes into $O(Mn\log_2(\frac{1}{\delta}))$ gates, where $M$ is the number of two-level unitaries needed to decompose the Clebsch-Gordan transforms through all the $n-1$ iterations. 
Since each $\CGl$ on $\Qld$ for $\lambda\vdash k$ is of size $d\cdot\dim\Qld\leq d(k+1)^{d-1}$, it may be decomposed into an upper bound of $(d\cdot\dim\Qld)^2 \leq d^2(k+1)^{2\cdot(d-1)}$ two-level unitaries. 
This implies
\begin{align}
    M \leq \sum_{k=1}^{n-1}d^2\cdot (k+1)^{2d-2}\leq\int_1^{n} d^2\cdot(k+1)^{2d-2} dk = O(d\cdot n^{2d-1}).
\end{align}
Using \thref{traceapprox} again, the sequence of $M= O(d\cdot n^{2d-1})$ two-level unitaries is approximated to accuracy $\epsilon$ if each two-level unitary in the sequence is approximated to accuracy $\delta\leq\frac{\epsilon}{cdn^{2d-1}}$.
Putting everything together brings the total number of gates required to implement our weak Schur sampling algorithm for qudits to
\begin{align}
    O\Big(dn^{2d-1}\cdot n\log^p_2\Big(\frac{1}{\delta}\Big)\Big) = O\Big(dn^{2d}\log_2^p\Big(\frac{n^{2d}}{\epsilon}\Big)\Big).
\end{align}

\section{Conclusion and discussion}
We provide an algorithm for weak Schur sampling which takes a stream of $n$ qudits as input and performs weak Schur sampling, returning the Young label $\lambda$ and the label $p_\lambda \in \Pl$ best describing the system of $n$ qudits. 
This algorithm implements weak Schur sampling to accuracy $\epsilon$ using $O(n^3\log_2(\frac{n}{\epsilon}))$ qubit, respectively $O(dn^{2d}\log_2^p(\frac{n^{2d}}{\epsilon}))$ qudit for $p\approx 4$, primitive gates and requires $O(\log_d n)$ qudits of memory.
Therefore, our weak Schur sampling algorithm incurs an advantage in the number of operations and an exponential advantage in memory over previous weak Schur sampling algorithms including generalised phase estimation. 
Moreover, the algorithm is streaming so it may be implemented even if only a partial amount of the input is available to the quantum machine.

We expect our algorithm to find uses beyond the NISQ era, in error corrected devices with limited quantum memory; where the logarithmic memory requirement can be taken advantage of to perform weak Schur sampling on systems with a number of qubits exponential to that of the device's memory. 

The memory savings also suggest an efficient classical simulation algorithm for weak Schur sampling:
Indeed, the hurdle when trying to simulate a quantum system of $O(n)$ qubits in a classical computer is that the required parameters scale as $O(2^n)$. 
However, logarithmic memory size requires only $2^{O(\log n)}=O(n)$ parameters. 
In fact, \cite{HS18} already provides an efficient classical simulation algorithm for the class of \textit{quantum Schur sampling circuits}, which we accomplish as a corollary of our logarithmic memory implementation.

{\bf Acknowledgements.}
E.C.M. was funded by the National Research Foundation, Singapore and A*STAR under its CQT Bridging Grant.
L.M. was funded by the European Union under the Grant Agreement No 101078107, QInteract and VILLUM FONDEN via the QMATH Centre of Excellence (Grant No 10059) and Villum Young Investigator grant (No 37532).

\newpage

\bibliographystyle{quantum}
\bibliography{refs}

\end{document}